\newcommand{\prob}[1]{P \left(#1\right)}
\newcommand{\indicator}[1]{\mathds{1}\left[#1\right]}
\def\paragraph{\@startsection{paragraph}{4}%
	\z@\z@{-\fontdimen2\font}%
	{\normalfont\bfseries}}
\newlength\shlength
\newcommand\xshlongvec[2][0]{\setlength\shlength{#1pt}%
	\stackengine{-5.6pt}{$#2$}{\smash{$\kern\shlength%
			\stackengine{7.55pt}{$\mathchar"017E$}%
			{\rule{\widthof{$#2$}}{.57pt}\kern.4pt}{O}{r}{F}{F}{L}\kern-\shlength$}}%
	{O}{c}{F}{T}{S}}
\newcommand{\RN}[1]{%
	\textup{\uppercase\expandafter{\romannumeral#1}}%
}
\newtheorem{thm}{Theorem}[section]
\newtheorem{lemma}[thm]{Lemma}
\newtheorem{remark}[thm]{Remark}
\def\<{\langle}
\def\>{\rangle}
\numberwithin{equation}{section}
\pgfplotsset{compat=1.17}
\def\smallunderbrace#1{\mathop{\vtop{\m@th\ialign{##\crcr
				$\hfil\displaystyle{#1}\hfil$\crcr
				\noalign{\kern3\p@\nointerlineskip}%
				\tiny\upbracefill\crcr\noalign{\kern3\p@}}}}\limits}
\newcommand{\AWQPE}{Adaptive Windowed Quantum Phase Estimation}
\newcommand{\AWQPEs}{AWQPE}
\newcommand{\mList}{[m_1, m_2, \dots, m_B]} 
\newcommand{\nTotal}{n} 
\newcommand{\currentM}{m_i} 
\newcommand{\phiEst}{\phi_{\text{est}}}
\newcommand{\phiRaw}{\phi_{\text{raw}}}
\newcommand{\AmbFlags}{\mathcal{A}} 
\newcommand{\LastIdx}{\mathcal{S}_{\text{idx}}} 
\newcommand{\NumShots}{N_{\text{shots}}}
\newcommand{\NumTarg}{n_T}
\newcommand{\UMat}{U}
\newcommand{\Eigenstate}{\ket{u}}
\newcommand{\Threshold}{\epsilon}
\newcommand{\CurrentIter}{i}
\newcommand{\TotalBitsEstimated}{k}
\newcommand{\MostLikelyOutcome}{\mathbf{b}_{\text{ml}}}
\newcommand{\FlagAmb}{\text{flag}_{\text{amb}}}
\newcommand{\Frac}[1]{\left\{ #1 \right\}}
\newcommand{\bestApprox}[2]{\widehat{#1}_{#2}}
\def\algocf@startfloat{
 \let\orig@float\relax
 \ifx\orig@float\undefined
  \let\orig@float\float
 \fi
 \orig@float
}
\begin{document}
 \title{Towards Practical Quantum Phase Estimation:\\ A Modular, Scalable, and Adaptive Approach}
	\author[1]{Alok Shukla \thanks{Corresponding author.}}
	\author[2]{Prakash Vedula}
	\affil[1]{School of Arts and Sciences, Ahmedabad University, India}
	\affil[1]{alok.shukla@ahduni.edu.in}
	\affil[2]{School of Aerospace and Mechanical Engineering, University of Oklahoma, USA}
	\affil[2]{pvedula@ou.edu}
	
\date{}

	\maketitle

\begin{abstract}
Quantum Phase Estimation (QPE) is a cornerstone algorithm in quantum computing, with applications ranging from integer factorization to quantum chemistry simulations. However, the resource demands of standard QPE, which require a large number of coherent qubits and deep circuits, pose significant challenges for current Noisy Intermediate Scale Quantum (NISQ) devices.
In this work, we introduce the Adaptive Windowed Quantum Phase Estimation (AWQPE) algorithm, a novel method designed to address the limitations of standard QPE. AWQPE utilizes small, independent blocks of $m > 1$ control qubits to estimate multiple phase bits simultaneously within a ``window,'' thereby significantly reducing the number of iterations required to achieve a desired precision. These independent blocks are amenable to parallelization and, when combined with a robust least-significant-bit (LSB) to most-significant-bit (MSB) ambiguity resolution mechanism, enhance the algorithm's accuracy while mitigating the risk of error propagation. 
Our numerical simulations demonstrate AWQPE’s accuracy and robustness, showcasing a distinct balance between resource efficiency and computational speed. This makes AWQPE particularly well-suited for near-term quantum platforms. 
\end{abstract}

\maketitle

\section{Introduction}
\label{sec:introduction}

Quantum Phase Estimation (QPE) is a fundamental algorithm in quantum computation with broad applications, including Shor's algorithm for integer factorization and solving linear systems of equations \cite{nielsen2002quantum, shor1994algorithms, harrow2009quantum}. Given a unitary operator $U$ and an initial quantum eigenstate $\ket{u}$, QPE aims to estimate the corresponding eigenphase $\phi$ where $U\ket{u} = e^{2\pi i \phi}\ket{u}$.

The standard QPE algorithm, while theoretically powerful, is characterized by its demanding resource requirements, making it largely infeasible for current Noisy Intermediate-Scale Quantum (NISQ) devices. This approach requires a large register of $n$ coherent ancilla qubits to achieve a high $n$-bit precision, and its circuit depth scales poorly due to the high powers of $U$ and the Inverse Quantum Fourier Transform (IQFT). 

To address these limitations, many alternative QPE methods have been explored in the literature, with a focus on improving efficiency and robustness against noise \cite{knill2007optimal, berry2015simulating, poulin2009sampling, wang2022quantum, dong2022ground, wan2022randomized, zhang2022computing}. Iterative phase estimation (IPE) methods, such as Kitaev's algorithm \cite{kitaev1995quantum}, were among the first to offer a more hardware-friendly alternative. These algorithms estimate the phase bit-by-bit using a simple circuit based on the Hadamard test with dyadic powers $U^{2^j}$. The sequential nature of IPE significantly reduces the circuit depth and qubit count, but it is susceptible to error accumulation, where inaccuracies in early bit estimations can propagate and corrupt subsequent results. Furthermore, these methods often require a large number of measurement shots per iteration to achieve statistical certainty, leading to a considerable total runtime.

More recent iterative methods, such as Robust Phase Estimation (RPE) or its variants \cite{kimmel2015robust,belliardo2020achieving,russo2021evaluating,ni2023low}, build upon this framework by using a single ancilla qubit and statistical techniques to mitigate noise. However, their bit-by-bit estimation still limits their efficiency for high-precision tasks.

In this paper, we introduce the Adaptive Windowed Quantum Phase Estimation (AWQPE) algorithm, a novel method that extends the practical benefits of iterative techniques while overcoming their key limitations. Our approach utilizes small, independent fixed blocks, each with $m_j>1$ control qubits, to estimate multiple phase bits simultaneously in a ``window." These independent blocks also allow for parallelized estimation. The robust classical post-processing logic (for ambiguity resolution) of AWQPE, enables it to achieve a high degree of precision with significantly lower number of qubits,  circuit depth and gate complexity per block. The flexibility to choose a small window size  maintains a low qubit count, while the reduced number of iterations mitigates the risks of error accumulation and minimizes total runtime. This strategy provides a distinct balance between resource efficiency and computational speed, making AWQPE particularly well-suited for NISQ platforms.

The remainder of this paper is organized as follows: Section \ref{sec:algo} details the proposed AWQPE algorithm. A proof of correctness of the AWQPE algorithm is shown in Section \ref{sec:proof}.
A discussion on computational complexity of the AWQPE algorithm is presented in Section \ref{sec:complexity}.
Section \ref{sec:advantages} explicitly discusses the advantage of our proposed approach. Section \ref{sec:results} presents numerical simulation results. Finally, Section \ref{sec:conclusion} concludes with a summary of the presented work.

\section{\AWQPE \, (\AWQPEs) }
\label{sec:algo}

The Adaptive Windowed Quantum Phase Estimation (AWQPE) algorithm is designed for  estimating (with a high degree of precision) the phase $\phi$ of an eigenstate $\Eigenstate$ of a unitary operator $U$, which acts on $\NumTarg$ target qubits. This method employs a windowed approach, characterized by a bit allocation strategy denoted as $ \mList$, where the total number of bits to be estimated is $\nTotal = \sum_{i=1}^{B} m_i$. Here the $j$-th block is allocated $m_j$ qubits and each block is independently processed by the AWQPE algorithm. An important component of AWQPE is its classical post-processing step, specifically engineered for robust ambiguity resolution.

For clarity in handling modular arithmetic and comparisons, particularly in the context of phase estimation where $0$ and $n-1$ (for $n$ discrete values) are adjacent, we define a modified minimum function. For distinct $a, b \in \{0, 1, 2, \ldots, n-1\}$, the minimum operation modulo $n$ is defined as:
\begin{align} \label{eq:minabmod}
\min(a, b) \bmod n =
\begin{cases}
\min(a, b) & \text{if } a < b,\ a \notin \{0,n-1\},\ b \notin \{0,n-1\}, \\
n - 1 & \text{if } (a = 0 \text{ and } b = n - 1) \text{ or } (a = n - 1 \text{ and } b = 0).
\end{cases}
\end{align}
This definition ensures that when 0 and $n-1$ are considered, the result appropriately reflects their proximity in $\mathbb{Z}/n \mathbb{Z}$.

The structure of the AWQPE approach is presented in Algorithm \ref{alg:wiqpe_multi_qubit}. The process begins with an initialization phase where variables are set: $\phiRaw$ is initialized as an empty string to accumulate raw binary estimates, $\nTotal$ is calculated as the sum of all chunk sizes, $\TotalBitsEstimated$ tracks the total number of bits estimated so far, $\AmbFlags$ is an empty list to record ambiguity statuses for each chunk, and $\CurrentIter$ is set to zero. An ambiguity threshold $\Threshold$, typically around $0.9$, is established to identify potentially ambiguous measurement outcomes.

The core of Algorithm \ref{alg:wiqpe_multi_qubit} is an iterative loop that continues until all $\nTotal$ bits of the phase have been estimated. In each iteration, a quantum circuit is prepared to estimate a specific segment of the phase, corresponding to the current window size, $\currentM$, from the list $\mList$. This circuit involves allocating $\currentM$ control qubits and $\NumTarg$ target qubits, initializing the target qubits in the eigenstate $\Eigenstate$, and applying Hadamard gates to all control qubits. Subsequently, a sequence of controlled unitary operations $U^{2^{\TotalBitsEstimated + p}}$ are applied, where $p$ ranges from $0$ to $\currentM-1$, to the target qubits, controlled by the respective control qubits. Following these operations, an inverse Quantum Fourier Transform (IQFT) is applied to the control qubits, which are then measured in the computational basis.

The circuit is executed $\NumShots$ times, and the frequencies of all possible outcome strings of length $\currentM$ are recorded. The algorithm then identifies the two most likely outcomes, $t_1^*$ and $t_2^*$, based on their respective counts, $C(t_1^*)$ and $C(t_2^*)$. Initially, $t_1^*$ is selected as the chunk estimate (i.e., the most likely binary string estimate for the $i$-th chunk), $\MostLikelyOutcome$. An ambiguity check is then performed: if the ratio of the count of the second most likely outcome to the most likely outcome ($C(t_2^*)/C(t_1^*)$) exceeds the predefined threshold $\Threshold$, an ambiguity flag, $\FlagAmb$, is set to True. If ambiguity is detected and the current block is not the final block (i.e., $\TotalBitsEstimated + \currentM < \nTotal$), the chunk estimate $\MostLikelyOutcome$ is updated to $\min(t_1^*, t_2^*) \bmod 2^{\currentM}$, adhering to the modular minimum definition in Eq. \ref{eq:minabmod}. The $\FlagAmb$ is appended to $\AmbFlags$ (which is a list that stores the ambiguity flag for each chunk), $\MostLikelyOutcome$ is concatenated to $\phiRaw$, and $\TotalBitsEstimated$ and $\CurrentIter$ are updated for the next iteration.

Upon completion of all iterations, a classical post-processing routine, \texttt{AWQPEAmbiguityResolution}, is invoked. This routine, detailed in Algorithm \ref{alg:wiqpe_ambiguity_resolution}, takes raw estimates $\phiRaw$, $\mList$, and $\AmbFlags$ as input to produce the final corrected binary phase estimate, $\phiEst$, and an index $\LastIdx$, indicating a special chunk if present.

Algorithm \ref{alg:wiqpe_ambiguity_resolution} functions by first partitioning the concatenated raw binary string $\phiRaw$ into its original $m_j$-bit chunks, denoted as $[\phi^{(1)}, \phi^{(2)}, \dots, \phi^{(B)}]$. A preliminary step identifies the rightmost non-zero chunk that corresponds to the integer value $2^{\currentM - 1}$ (binary `10...0'). If such a chunk is found, its index is stored in $\LastIdx$. This identification is critical for specific correction rules.

The core of the ambiguity resolution (handled via Algorithm \ref{alg:wiqpe_ambiguity_resolution}) involves applying Least Significant Bit (LSB)-to-Most Significant Bit (MSB) corrections. This process iterates backward from the second-to-last chunk ($j = B-1$) to the first chunk ($j = 1$). In each step, a correction bit,  $b_{\mathrm{corr}}$, is initially determined by the MSB of the next less significant chunk, $\phi^{(j+1)}$. However, a key aspect of the algorithm's robustness is its conditional correction: if the current chunk $\phi^{(j)}$ was marked as ambiguous in $\AmbFlags[j-1]$ (adjusting for 0-indexing of $\AmbFlags$ versus 1-indexing of chunks), or if the immediately less significant chunk $\phi^{(j+1)}$ was identified as the special chunk (i.e., $\LastIdx = j+1$), then $b_{\mathrm{corr}}$ is set to 0, effectively bypassing the correction for that specific chunk. Otherwise, the integer value $x$ of the current chunk $\phi^{(j)}$ is updated to $x_{\mathrm{new}} \gets (x - b_{\mathrm{corr}}) \bmod 2^{m_j}$, applying the correction modulo $2^{m_j}$. The corrected integer value $x_{\mathrm{new}}$ is then converted back into its $m_j$-bit binary string representation and replaces the original $\phi^{(j)}$. Finally, after all corrections have been applied, the individual corrected chunks $\phi^{(1)}, \phi^{(2)}, \dots, \phi^{(B)}$ are concatenated to form the final corrected binary phase estimate, $\phiEst$.

\begin{algorithm}[H]
\small
\SetAlgoLined
\SetKwInOut{Input}{Input}
\SetKwInOut{Output}{Output}

\Input{
Bit allocation per block $[m_1, m_2, \dots, m_B]$ with $m_i > 1$;\\
$\UMat$: unitary operator acting on $2^{\NumTarg}$-dimensional space;\\
$\Eigenstate$: eigenstate of $\UMat$;\\
$\NumTarg$: number of target qubits;\\
$\NumShots$: number of measurement repetitions per iteration.\\
}

\Output{
$\phiEst$: final binary estimate of the phase;\\
$\phiRaw$: concatenated raw estimate from each window;\\
$\AmbFlags$: list of ambiguity flags for each chunk;\\
$\LastIdx$: index of the special chunk (if any), otherwise \texttt{`None'}.
}

\textbf{Initialization:} \\
$\phiRaw \gets \texttt{""}$, $\nTotal \gets \sum_{j=1}^{B} m_j$, 
$\TotalBitsEstimated \gets 0$, 
$\AmbFlags \gets [\ ]$, 
$\CurrentIter \gets 0$,
Set an ambiguity threshold $\Threshold$ (e.g., $0.9$); \\

\While{$\TotalBitsEstimated < \nTotal$}{
 $\currentM \leftarrow $ the current window size for block $i$; \tcp{Current window size from $\mList$.}
 
 Allocate $\currentM$ control qubits and $\NumTarg$ target qubits;\\
 Initialize target qubits in eigenstate $\Eigenstate$;\\
 Apply Hadamard gates to all control qubits;\\
 \For{$p \gets 0$ \KwTo $\currentM - 1$}{
  Apply controlled unitary $U^{2^{\TotalBitsEstimated + p}}$ to the target qubits controlled on $p$-th control qubit;
 }
 Apply inverse Quantum Fourier Transform (IQFT) to control qubits;\\
 Measure the control qubits in the computational basis;

 Execute the circuit $\NumShots$ times and record the frequency of each outcome string of length $\currentM$
 \tcc*{Determine most likely outcomes. In case of tie, any choice is picked randomly.} 
  $C(t_1^{*}) \leftarrow $ the count of the most likely outcome, i.e., $C(t_1^*) \geq C(t)$ for $t \in \{0,\dots, 2^{m_i} -1\}$, where $C(t)$ is the count of observing the state $t$ on measurement;  
  \\
 $C(t_2^{*}) \leftarrow $ the count of the most likely outcome, i.e., $C(t_2^*) \geq C(t)$ for $t \in \{0,\dots, 2^{\currentM} -1\} - \{t_1^*\}$; \\
 $\MostLikelyOutcome \leftarrow t_1^*$; \tcp{Select $t_1^*$ as the chunk estimate.}
  \tcc{Check for ambiguity.}
  $\FlagAmb \gets \text{False}$;\\
 \If{({${C(t_{2}^*)}/ {C(t_1^*)} > \Threshold$}) } 
 {
  $\FlagAmb \gets \text{True}$;\\
  \If{($\TotalBitsEstimated + \currentM < \nTotal$)}
  {
  $\MostLikelyOutcome \leftarrow  \min(t_1^{*}, t_2^{*}) \mod 2^{\currentM}$; \tcp{Select $\min(t_1^*, t_2^*) \bmod 2^{\currentM}$ as the chunk estimate. Refer Eq.~\ref{eq:minabmod}.}
  }
 }
 Append $\FlagAmb$ to $\AmbFlags$;\\
 $\phiRaw \leftarrow \text{ concatenate } \phiRaw \circ \MostLikelyOutcome$; \tcp{Here '$\circ$' operation represents concatenation of strings.}
 $\TotalBitsEstimated \gets \TotalBitsEstimated + \currentM$, 
 $\CurrentIter \gets \CurrentIter + 1$;
}

Apply a post-processing routine \texttt{AWQPEAmbiguityResolution}$(\phiRaw, \mList, \AmbFlags)$ to obtain 
$\phiEst$ and $\LastIdx$ \tcp*{Refer to Algorithm~\ref{alg:wiqpe_ambiguity_resolution}.}

\KwResult{$\phiEst$, $\phiRaw$, $\AmbFlags$, $\LastIdx$.}
\caption{\AWQPE \, (\AWQPEs) with Multi-Qubit Target}
\label{alg:wiqpe_multi_qubit}
\end{algorithm}

\begin{algorithm}[H]
\small
\SetAlgoLined
\SetKwInOut{Input}{Input}
\SetKwInOut{Output}{Output}

\Input{
$\phiRaw$: Concatenated raw estimate from Algorithm~\ref{alg:wiqpe_multi_qubit}; \\
$ [m_1, m_2, \dots, m_B]$: Bit allocation per block;\\
$\AmbFlags$: List of ambiguity flags (MSB to LSB order).
}

\Output{
$\phiEst$: Final corrected binary phase estimate;\\
$\LastIdx$: Index of special chunk (if any), otherwise \texttt{`None'}.
}

\BlankLine
\tcp{Split concatenated string into $m_j$-bit chunks}
Partition $\phiRaw$ into substrings $[\phi^{(1)}, \phi^{(2)}, \dots, \phi^{(B)}]$ such that the size of the $j$-th block is $m_j$, i.e., $|\phi^{(j)}| = m_j$;\\
$\LastIdx \gets \texttt{`None'}$; 

\BlankLine
\tcp{Step 1: Identify rightmost non-zero chunk equal to $2^{m_j - 1}$ (i.e., '10...0')}
\For{$j \leftarrow B$ \KwTo $1$}{
 Let $x \gets$ integer value of $\phi^{(j)}$;\\
 \If{$x = 0$}{Continue;}
 \Else{
  \If{$x = 2^{m_j - 1}$}{
$\LastIdx \leftarrow j$; \\
}
 \textbf{break}
  }
 }

\BlankLine
\tcp{Step 2: Apply LSB-to-MSB corrections}
\For{$j \gets B - 1$ \KwTo $1$}{
 $b_{\mathrm{corr}} \gets$ most significant bit of $\phi^{(j+1)}$; \tcp{Default correction from MSB of next chunk}

 \If{($\AmbFlags[j] = \text{True}$) or ($\LastIdx = j+1$)}{
  $b_{\mathrm{corr}} \gets 0 $; \tcp{For ambiguous chunk, or if next chunk is special, no correction needed.}
 }

 Let $x \gets$ integer value of $\phi^{(j)}$;\\
 $x_{\mathrm{new}} \gets (x - b_{\mathrm{corr}}) \mod 2^{m_j}$; \tcp{Apply correction modulo $2^{m_j}$}
 $\phi^{(j)} \gets$ binary string of $x_{\mathrm{new}}$ with $m_j$ bits;
}

\BlankLine
\tcp{Step 3: Reconstruct corrected binary phase}
$\phiEst \gets$ concatenate $\phi^{(1)} \circ \phi^{(2)} \circ \cdots \circ \phi^{(B)}$;

\KwResult{$\phiEst$, $\LastIdx$}
\caption{\texttt{AWQPEAmbiguityResolution} - \AWQPEs \, Ambiguity Resolution (LSB-to-MSB Post-Processing)}
\label{alg:wiqpe_ambiguity_resolution}
\end{algorithm} 
\vspace{0.5cm}

\begin{remark}[Special Chunk]
    The `special chunk' ($\LastIdx$) is designed to address the exceptionally rare cases where the true fractional part of the phase on the right of a chunk boundary is exactly $0.5$. In such a scenario, the standard LSB-to-MSB correction, which relies on the fractional part being strictly greater than or less than $0.5$, becomes ambiguous. While in standard QPE this typically introduces only a small rounding ambiguity, in a windowed iterative scheme like AWQPE, an unhandled $0.5$ ambiguity could, in principle, propagate errors. However, this edge case is statistically highly improbable. Furthermore, its potential impact can be mitigated by slight adjustments to the bit allocation $m_i$ or by introducing a small perturbation to the phase (see Remark \ref{remark:perturb}), ensuring the fractional part is not precisely $0.5$. Even in these specific `special chunk' instances where the ambiguity resolution mechanism bypasses the correction due to this exact $0.5$ condition, the quantum measurement itself will yield one of two possible results for the problematic bit with approximately 50\% probability. This implies that there is still a 50\% chance that the uncorrected bit (due to this specific bypass) is, in fact, the correct one.
\end{remark}

\begin{remark} [Enhanced Confidence and Ambiguity Handling]\label{remark:perturb}
    While AWQPE is inherently robust, an additional procedure can be employed to achieve absolute confidence in the estimated phase, particularly to explicitly handle the rare `special chunk' cases. This involves performing the AWQPE procedure twice. The first run is for the original unitary operator $U$ to estimate $\phi$. If this run identifies a `special chunk' (indicated by $\LastIdx$), the nature of this special chunk provides precise information about the problematic $0.5$ fractional part. A perturbation $\Delta\phi$ can then be strategically chosen for the second run, such that the problematic fractional part is explicitly shifted away from the $0.5$ ambiguity point. The second run is then performed for a perturbed unitary operator $U' = e^{i2\pi\Delta\phi} U$ to estimate $\phi + \Delta\phi$, where $\Delta\phi$ could be implemented via appropriate additional phase gates. Let the two estimated phases be $\phi_{\text{est}}$ and $\phi'_{\text{est}}$. Absolute confidence can then be established by verifying that $\phi'_{\text{est}} - \phi_{\text{est}} \approx \Delta\phi \pmod{2\pi}$. A significant deviation from the expected $\Delta\phi$ would immediately flag a potential issue. This method provides a powerful cross-validation with a targeted approach to ambiguities.
\end{remark}

\begin{remark}[Alternative Ambiguity Resolution Criteria]
While our ambiguity resolution procedure relies on the condition ${C(t_{2}^*)}/ {C(t_1^*)} > \Threshold$,  other information theoretic approaches may also be employed. For example, Shannon entropy can serve as a global measure of uncertainty, flagging ambiguous distributions when entropy exceeds a given threshold. Similarly, Kullback–Leibler (KL) divergence between the observed distribution and an ideal (for example, peaked or uniform) reference distribution can provide a quantitative diagnostic for ambiguity. 
\end{remark}

\begin{remark}[Relevance to Shor's Algorithm]
The core framework of AWQPE algorithm can be effectively applied to integer factorization by substituting the QPE block in Shor’s algorithm. We have validated its feasibility through preliminary experiments, with a detailed account to appear in a forthcoming manuscript.

\end{remark}

Having detailed the procedural steps of the AWQPE algorithm and its classical ambiguity resolution routine (Algorithms \ref{alg:wiqpe_multi_qubit} and \ref{alg:wiqpe_ambiguity_resolution}), we now proceed to rigorously establish its correctness. The efficacy of the LSB-to-MSB correction mechanism, in particular, relies on fundamental properties of binary approximations and modular arithmetic. To this end, we first introduce a set of essential lemmas that underpin the algorithm's mathematical foundation.

\section{Proof of Correctness of \AWQPEs \, Algorithm}
\label{sec:proof}

\subsection{Useful Lemmas for Reconstruction of Estimated Phase}

To establish the theoretical guarantees of the \AWQPEs \, algorithm, particularly regarding the reconstruction of the estimated phase from its binary chunks, we first define the following preliminaries and notation.

\subsection*{Preliminaries and Notation}
\begin{itemize}[noitemsep]
 \item Let \( x \in [0, 1) \) be a real number and let \( m, k \in \mathbb{Z}^+ \) denote positive integers.
 \item An \( n \)-bit binary fraction is a number of the form \( I/2^n \), where \( I \in \mathbb{Z} \).
  \item The ``best $n$-bit approximation'' of a real number $y \in [0, 1)$, denoted \( \bestApprox{y}{n} \),  is defined as the $n$-bit binary fraction $I/2^n$ that minimizes $|y - I/2^n|$. Formally, $\bestApprox{y}{n}  = I = \lfloor y \cdot 2^n + 0.5 \rfloor$.
 \item Define \( \bestApprox{x}{m} = b_m / 2^m \), with numerator \( b_m = \lfloor x \cdot 2^m + 0.5 \rfloor \).
 \item Let \( x' = 2^m x \in [0, 2^m) \) and define \( \delta = \Frac{x'} = x' - \lfloor x' \rfloor \), the fractional part of \( x' \).
 \item Let \( b_k = \lfloor \delta \cdot 2^k + 0.5 \rfloor \), and define the best \( k \)-bit approximation to \( \delta \) as \( \bestApprox{\delta}{k} = b_k / 2^k \).
 \item We assume that \( \bestApprox{\delta}{k} \ne 0.5 \), i.e., \( b_k \ne 2^{k-1} \), to avoid ambiguity in rounding.
\end{itemize}

\begin{lemma}
\label{lem:delta_gt_half}
Let \( \delta \in [0,1) \), and define \( \bestApprox{\delta}{k} = \frac{\lfloor \delta \cdot 2^k + 0.5 \rfloor}{2^k} \). If \( \bestApprox{\delta}{k} > 0.5 \), then \( \delta > 0.5 \).
\end{lemma}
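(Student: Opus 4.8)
The statement is a monotonicity fact about the rounding map, so the cleanest route is to argue by contraposition: assume $\delta \le 0.5$ and show that necessarily $\bestApprox{\delta}{k} \le 0.5$, which contradicts the hypothesis $\bestApprox{\delta}{k} > 0.5$. This reduces everything to a one-line estimate on the floor function evaluated near the half-integer $2^{k-1}$.

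Concretely, the plan is: from $\delta \le 0.5$ deduce $\delta \cdot 2^k + 0.5 \le 2^{k-1} + 0.5 < 2^{k-1} + 1$. Since $2^{k-1}$ is an integer and $k \in \mathbb{Z}^+$, the number $2^{k-1} + 0.5$ lies in the half-open interval $[2^{k-1}, 2^{k-1}+1)$, so $\lfloor 2^{k-1} + 0.5 \rfloor = 2^{k-1}$. By monotonicity of $\lfloor \cdot \rfloor$ we then get $\lfloor \delta \cdot 2^k + 0.5 \rfloor \le 2^{k-1}$, and dividing by $2^k$ yields $\bestApprox{\delta}{k} = \frac{\lfloor \delta \cdot 2^k + 0.5\rfloor}{2^k} \le \frac{2^{k-1}}{2^k} = 0.5$. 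This contradicts $\bestApprox{\delta}{k} > 0.5$, so the assumption $\delta \le 0.5$ is impossible and therefore $\delta > 0.5$.

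There is essentially no hard step here; the only point requiring a moment's care is the boundary evaluation $\lfloor 2^{k-1} + 0.5 \rfloor = 2^{k-1}$ (i.e., that the ``round half up'' convention in the definition of $\bestApprox{\cdot}{k}$ pushes the half-integer threshold to exactly $0.5$ and not strictly above it), which is why the hypothesis is $\bestApprox{\delta}{k} > 0.5$ with a strict inequality rather than $\ge$. One could alternatively give a direct (non-contrapositive) proof: $\bestApprox{\delta}{k} > 0.5$ means $\lfloor \delta \cdot 2^k + 0.5\rfloor \ge 2^{k-1} + 1$ (an integer strictly exceeding $2^{k-1}$), hence $\delta \cdot 2^k + 0.5 \ge 2^{k-1} + 1$, i.e.\ $\delta \ge 0.5 + 2^{-k} > 0.5$; this version has the mild bonus of giving the quantitative gap $\delta \ge 0.5 + 2^{-k}$, which may be convenient when this lemma is invoked later in the ambiguity-resolution analysis. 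I would present whichever of the two is shorter in context, probably the direct one.
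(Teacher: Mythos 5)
Your direct (non-contrapositive) version is essentially identical to the paper's proof: it uses integrality to upgrade $\lfloor \delta \cdot 2^k + 0.5\rfloor > 2^{k-1}$ to $\ge 2^{k-1}+1$ and then unwinds the floor, and the contrapositive variant is equally valid. One small correction to your quantitative aside: dividing $\delta \cdot 2^k \ge 2^{k-1} + 0.5$ by $2^k$ gives $\delta \ge \tfrac{1}{2} + 2^{-(k+1)}$, not $\tfrac{1}{2} + 2^{-k}$ (the paper records the former).
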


\begin{proof}
Assume \( \bestApprox{\delta}{k} > 0.5 \). Then:
\[
\frac{\lfloor \delta \cdot 2^k + 0.5 \rfloor}{2^k} > 0.5.
\]
Multiplying both sides by \( 2^k \), we obtain:
\[
\lfloor \delta \cdot 2^k + 0.5 \rfloor > 2^{k-1}.
\]
Since the left-hand side is an integer, it follows that
\[
\lfloor \delta \cdot 2^k + 0.5 \rfloor \ge 2^{k-1} + 1.
\]
By the definition of the floor function,
\[
\delta \cdot 2^k + 0.5 \ge 2^{k-1} + 1 \quad \Rightarrow \quad \delta \cdot 2^k \ge 2^{k-1} + 0.5.
\]
Dividing both sides by \( 2^k \) yields:
\[
\delta \ge \frac{2^{k-1} + 0.5}{2^k} = \frac{1}{2} + \frac{1}{2^{k+1}} > \frac{1}{2}.
\]
Thus, \( \delta > 0.5 \), as claimed.
\end{proof}

\begin{lemma}
\label{lem:combined_approx}
Let \( x \in [0, 1) \) be a real number. Further, let \( \bestApprox{x}{m} = b_m / 2^m \), with \( b_m = \lfloor x \cdot 2^m + 0.5 \rfloor \), and \( \delta = \Frac{2^m x} \). Let \( b_k = \lfloor \delta \cdot 2^k + 0.5 \rfloor \), and define \( \bestApprox{\delta}{k} = b_k / 2^k \), assuming \( \bestApprox{\delta}{k} \ne 0.5 \). Then, the best \((m+k)\)-bit approximation to \( x \) is given by:
\[
\hat{x}_{m+k} = \frac{B}{2^{m+k}}, \quad \text{where } B = p \cdot 2^k + b_k,
\]
with \( p \in \mathbb{Z} \) defined as:
\begin{enumerate}[label=(\alph*), itemsep=0pt]
 \item If \( \bestApprox{\delta}{k} > 0.5 \), then \( p = b_m - 1 \).
 \item If \( \bestApprox{\delta}{k} < 0.5 \), then \( p = b_m \).
\end{enumerate}
\end{lemma}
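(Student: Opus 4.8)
The plan is to compute the numerator $B$ of the best $(m+k)$-bit approximation directly from its definition and then split it into a ``high-bits'' part and a ``low-bits'' part. First I would write $2^m x = N + \delta$ with $N = \lfloor 2^m x\rfloor \in \mathbb{Z}$ and $\delta = \Frac{2^m x} \in [0,1)$, so that $x\cdot 2^{m+k} = 2^k N + 2^k\delta$. Since $\bestApprox{x}{m+k} = \lfloor x\cdot 2^{m+k} + 0.5\rfloor/2^{m+k}$ by definition, and $2^k N$ is an integer, the identity $\lfloor z + j\rfloor = \lfloor z\rfloor + j$ for $j\in\mathbb{Z}$ gives
\[
B \;=\; \bigl\lfloor x\cdot 2^{m+k} + 0.5\bigr\rfloor \;=\; 2^k N + \bigl\lfloor 2^k\delta + 0.5\bigr\rfloor \;=\; 2^k N + b_k .
\]
Thus $B = p\cdot 2^k + b_k$ with $p = N = \lfloor 2^m x\rfloor$, and all that remains is to identify $N$ in terms of $b_m$ in each of the two cases.

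Next I would relate $N$ to $b_m$ through $b_m = \lfloor 2^m x + 0.5\rfloor = \lfloor N + \delta + 0.5\rfloor = N + \lfloor \delta + 0.5\rfloor$. Because $\delta\in[0,1)$, this yields $b_m = N$ when $\delta < 0.5$ and $b_m = N+1$ when $\delta \ge 0.5$. Hence the statement reduces entirely to deciding, from the sign of $\bestApprox{\delta}{k} - 0.5$, whether $\delta < 0.5$ or $\delta > 0.5$. The borderline $\delta = 0.5$ cannot occur under the hypotheses: it would force $b_k = \lfloor 2^{k-1} + 0.5\rfloor = 2^{k-1}$, i.e.\ $\bestApprox{\delta}{k} = 0.5$, which is excluded.

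For case (a), Lemma~\ref{lem:delta_gt_half} gives precisely what is needed: $\bestApprox{\delta}{k} > 0.5 \Rightarrow \delta > 0.5 \Rightarrow b_m = N+1 \Rightarrow p = N = b_m - 1$. For case (b) I would establish the mirror statement: if $\bestApprox{\delta}{k} < 0.5$ then $\lfloor 2^k\delta + 0.5\rfloor < 2^{k-1}$, hence $\lfloor 2^k\delta + 0.5\rfloor \le 2^{k-1} - 1$, hence $2^k\delta + 0.5 < 2^{k-1}$, which gives $\delta < \tfrac12 - \tfrac{1}{2^{k+1}} < \tfrac12$; therefore $b_m = N$ and $p = b_m$. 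Combining either conclusion with $B = 2^k N + b_k$ completes the proof.

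The argument is essentially bookkeeping with the floor function, so I do not anticipate a serious obstacle; the only points needing care are (i) justifying the extraction of the integer $2^k N$ from the floor, (ii) confirming that $\delta = 0.5$ is ruled out by the stated hypothesis, and (iii) noting the harmless boundary behaviour when $\delta$ is extremely close to $1$ (so that $b_k = 2^k$) or when $x$ is extremely close to $1$ (so that $b_m = 2^m$), where the formula $B = p\cdot 2^k + b_k$ still holds but the chunks carry over — a situation that is absorbed in the algorithm by the modular reductions $\bmod\,2^{m_j}$. Each of these I would dispatch in a single line.
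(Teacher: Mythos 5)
Your proposal is correct and follows essentially the same route as the paper's proof: both extract the integer part $\lfloor 2^m x\rfloor\cdot 2^k$ from the floor to get $B=\lfloor 2^m x\rfloor\cdot 2^k+b_k$, and both then decide whether $\lfloor 2^m x\rfloor$ equals $b_m$ or $b_m-1$ by using Lemma~\ref{lem:delta_gt_half} (and its mirror) to translate the sign of $\bestApprox{\delta}{k}-0.5$ into the sign of $\delta-0.5$. The only cosmetic difference is that the paper parametrizes via $\eta=2^m x-b_m\in[-0.5,0.5)$ while you work directly with $N=\lfloor 2^m x\rfloor$ and the identity $b_m=N+\lfloor\delta+0.5\rfloor$.
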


\begin{proof}
By definition, \( \bestApprox{x}{m} = b_m / 2^m \), with \( b_m = \lfloor x \cdot 2^m + 0.5 \rfloor \), which implies:
\[
b_m - 0.5 \le x \cdot 2^m < b_m + 0.5.
\]
Let \( x' = 2^m x \). We can express \( x' \) as \( x' = b_m + \eta \), where \( \eta = x' - b_m \in [-0.5, 0.5) \).

The fractional part of \( x' \) is \( \delta = \Frac{x'} = x' - \lfloor x' \rfloor \). The relationship between \( \delta \) and \( \eta \) depends on the sign of \( \eta \):
\begin{itemize}[noitemsep]
 \item If \( \eta \ge 0 \) (i.e., \( x' \in [b_m, b_m + 0.5) \)), then \( \lfloor x' \rfloor = b_m \), and \( \delta = x' - b_m = \eta \).
 \item If \( \eta < 0 \) (i.e., \( x' \in [b_m - 0.5, b_m) \)), then \( \lfloor x' \rfloor = b_m - 1 \), and \( \delta = x' - (b_m - 1) = \eta + 1 \). In this case, \( \delta \in [0.5, 1) \).
\end{itemize}
Next, we have
\[
x \cdot 2^{m+k} = x' \cdot 2^k = (\lfloor x' \rfloor + \delta) \cdot 2^k.
\]
Hence, the exact numerator of \( \bestApprox{x}{m+k} \) is
\[
B_{\text{true}} = \lfloor x \cdot 2^{m+k} + 0.5 \rfloor = \lfloor \lfloor x' \rfloor \cdot 2^k + \delta \cdot 2^k + 0.5 \rfloor.
\]
Using the definition of \( b_k = \lfloor \delta \cdot 2^k + 0.5 \rfloor \), we obtain
\[
B_{\text{true}} = \lfloor x' \rfloor \cdot 2^k + b_k.
\]
We now determine \( \lfloor x' \rfloor \) based on the value of \( \delta \).

\noindent \textbf{Case (a):} \( \bestApprox{\delta}{k} > 0.5 \). Then by Lemma~\ref{lem:delta_gt_half}, \( \delta > 0.5 \). This occurs when \( \eta < 0 \), so \( x' \in [b_m - 0.5, b_m) \). Thus,
\[
\lfloor x' \rfloor = b_m - 1 \quad \Rightarrow \quad B_{\text{true}} = (b_m - 1) \cdot 2^k + b_k.
\]
Therefore, \( p = b_m - 1 \).

\noindent \textbf{Case (b):} \( \bestApprox{\delta}{k} < 0.5 \). Then \( \delta < 0.5 \) (which can be proved using an argument similar to that of Lemma~\ref{lem:delta_gt_half} ), which corresponds to \( \eta \ge 0 \), and hence \( x' \in [b_m, b_m + 0.5) \). So,
\[
\lfloor x' \rfloor = b_m \quad \Rightarrow \quad B_{\text{true}} = b_m \cdot 2^k + b_k.
\]
Therefore, \( p = b_m \).

\textbf{Excluded Case:} If \( \bestApprox{\delta}{k} = 0.5 \), we would have \( \delta = 0.5 \), which may lead to rounding ambiguity. However, this case is explicitly excluded by assumption.

In both admissible cases, the constructed numerator \( B = p \cdot 2^k + b_k \) coincides with \( B_{\text{true}} \), completing the proof.
\end{proof}

\subsection{Probabilistic Guarantees for Most Likely Outcome in \AWQPEs}

Let $\delta \in [0,1)$ be the true fractional phase for a window of $m$ bits. Let $\NumShots$ be the number of measurement repetitions for this window. The probability of measuring an outcome $j \in \{0, 1, \dots, 2^m-1\}$ is given by $P(j|\delta)$, which for the ideal standard QPE output distribution is the squared Dirichlet kernel,
$$P(j|\delta) =
\frac{1}{2^{2m}}\cdot\frac{\sin^2(2^m \pi \theta )}{\sin^2(\pi \theta)},$$
where $\theta = \delta-\frac{j}{2^m}$.
The true most probable outcome, $T_1$, is the integer $j \in \{0, \dots, 2^m-1\}$ that minimizes $|2^m\delta - j|$. This is unique unless $2^m\delta$ is exactly a half-integer. For a non-ambiguous phase, $T_1 = \lfloor 2^m \delta + 0.5 \rfloor \pmod{2^m}$. The true second most probable outcome, $T_2$, is the integer adjacent to $T_1$ with the second highest probability. Two integers $j_1, j_2 \in \{0, \dots, 2^m-1\}$ are defined as adjacent if $j_1-j_2  \equiv \pm 1 \pmod{2^m} $. Let $N_j$ be the number of times outcome $j$ is measured in $\NumShots$ trials.

\begin{lemma}[Probabilistic Outcome for a \AWQPEs \, Window]
\label{lem:probabilistic_outcome_quantitative}
For any iteration of the \AWQPE \, (\AWQPEs) algorithm, which aims to estimate $m$ bits of a fractional phase $\delta$, the following probabilistic guarantees hold.
\end{lemma}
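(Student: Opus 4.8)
The plan is to derive the stated guarantees from two essentially independent inputs: (i) sharp, monotone control of the ideal squared-Dirichlet-kernel distribution $P(\cdot\mid\delta)$ in a neighbourhood of its peak, and (ii) uniform concentration of the empirical counts $N_j$ about $\NumShots\,P(j\mid\delta)$. Once both are in hand, the behaviour of $t_1^{*}$, $t_2^{*}$, of the ratio test $C(t_2^{*})/C(t_1^{*})>\epsilon$, and of the chunk estimate $\MostLikelyOutcome$ is pinned down by a short deterministic case analysis, and the probabilistic statements are read off from the union-bound failure probability of (ii).

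\textbf{Step 1 (the ideal distribution).} Write $2^m\delta=T_1+\alpha$ with $\alpha\in[-\tfrac12,\tfrac12)$, so that $T_1=\lfloor 2^m\delta+0.5\rfloor\bmod 2^m$ and the sole serious competitor is the adjacent index $T_2=T_1+\operatorname{sgn}(\alpha)\bmod 2^m$. Since $2^m(\delta-j/2^m)$ differs from $\alpha$ by an integer for every $j$, the numerator $\sin^2(2^m\pi\theta)=\sin^2(\pi\alpha)$ is the same for all outcomes, so $P(j\mid\delta)$ is strictly decreasing in the circular distance from $j$ to $T_1$; in particular $P(T_1\mid\delta)\ge P(T_2\mid\delta)\ge P(j\mid\delta)$ for every other $j$. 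Using the elementary bounds $\tfrac{2}{\pi}|u|\le|\sin u|\le|u|$ on $[-\tfrac{\pi}{2},\tfrac{\pi}{2}]$ I will show $P(T_1\mid\delta)\ge 4/\pi^2$, that $R(\alpha):=P(T_2\mid\delta)/P(T_1\mid\delta)=\sin^2(\pi|\alpha|/2^m)/\sin^2(\pi(1-|\alpha|)/2^m)$ is continuous and strictly increasing in $|\alpha|$ with $R(0)=0$ and $R(\tfrac12)=1$, and that for each $j\notin\{T_1,T_2\}$ the gaps $P(T_1\mid\delta)-P(j\mid\delta)$ and (away from $\alpha\approx 0$) $P(T_2\mid\delta)-P(j\mid\delta)$ are bounded below by explicit functions of $|\alpha|$. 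Fixing $\gamma=\gamma(\epsilon)\in(0,\tfrac12)$ as the solution of $R(\tfrac12-\gamma)=\epsilon$ then splits phases into ``well separated'' ($|\alpha|\le\tfrac12-\gamma$, where $R<\epsilon$) and ``near ambiguous'' ($|\alpha|\ge\tfrac12-\gamma$, where $R>\epsilon$).

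\textbf{Step 2 (concentration) and Step 3 (conclusion).} The count vector $(N_0,\dots,N_{2^m-1})$ is multinomial, so Hoeffding's inequality (or a multiplicative Chernoff bound for sharper constants) gives $\Pr[\,|N_j/\NumShots-P(j\mid\delta)|>\tau\,]\le 2e^{-2\NumShots\tau^2}$, and a union bound over the $2^m$ outcomes yields that \emph{all} empirical frequencies lie within $\tau$ of the truth with probability at least $1-2^{m+1}e^{-2\NumShots\tau^2}$. Taking $\tau$ below the smallest of the Step 1 separations relevant to the regime (and hence $\NumShots=\Omega(\tau^{-2}(m+\log\tfrac1\eta))$ for failure probability $\eta$), one concludes on this good event that $\{t_1^{*},t_2^{*}\}=\{T_1,T_2\}$ always, with $t_1^{*}=T_1$ in the well separated case; that $C(t_2^{*})/C(t_1^{*})$ is within $O(\tau)$ of $R(\alpha)$, so the ratio test fires precisely in the near ambiguous regime; and, by a sign-of-$\alpha$ case check together with \eqref{eq:minabmod} for the wrap-around pair $\{0,2^m-1\}$, that $\min(t_1^{*},t_2^{*})\bmod 2^m=\lfloor 2^m\delta\rfloor\bmod 2^m$. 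Hence, when the flag is raised on a non-final block, $\MostLikelyOutcome=\lfloor 2^m\delta\rfloor\bmod 2^m$, which is exactly the floor value that the LSB-to-MSB correction underlying Lemma~\ref{lem:combined_approx} consumes; and when the flag is not raised, $\MostLikelyOutcome=T_1=\bestApprox{\delta}{m}\bmod 2^m$.

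\textbf{Main obstacle.} The delicate point is the thin transitional band of $\alpha$ where $R(\alpha)\approx\epsilon$: there the ratio test is intrinsically unstable, so a clean ``fires iff near ambiguous'' dichotomy can only hold with a margin. I expect to absorb this into the choice of $\gamma$, stating the two probabilistic guarantees on the regions $|\alpha|\le\tfrac12-\gamma$ and $|\alpha|\ge\tfrac12-\gamma$ separately and noting that the overlap is harmless, because the two conclusions $\MostLikelyOutcome=\bestApprox{\delta}{m}$ and $\MostLikelyOutcome=\lfloor 2^m\delta\rfloor$ differ only in the rounding of a single bit that the later correction step itself repairs, while the exactly ambiguous case $|\alpha|=\tfrac12$ (the ``special chunk'' scenario) is the one excluded by the standing assumption and handled separately via Remark~\ref{remark:perturb}. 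A secondary nuisance is that the separation $P(T_2\mid\delta)-P(j\mid\delta)$ degenerates as $\alpha\to 0$; this is benign, since for small $|\alpha|$ one never needs $t_2^{*}=T_2$, only $t_1^{*}=T_1$, whose margin stays $\ge 4/\pi^2$ minus a vanishing term.
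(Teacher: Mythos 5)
Your proposal is correct and rests on the same probabilistic core as the paper's proof: Hoeffding's inequality plus a union bound over the $2^m$ outcomes, combined with the same Dirichlet-kernel estimates ($P(T_1\mid\delta)\ge 4/\pi^2$, non-adjacent outcomes uniformly small, the adjacent outcome $T_2$ treated as a non-failure). The organization, however, is genuinely different. The paper applies Hoeffding separately to tailored pairwise comparison variables --- $N_{T_1}-N_j$ for identifying $T_1$, and $N_{T_2}-\Threshold N_{T_1}$ for the ratio test --- so each guarantee is read off from its own one-dimensional concentration event with an explicit shot count (ranges $2$ and $1+\Threshold$ respectively, and no division of empirical frequencies). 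You instead establish a single uniform $\ell_\infty$ concentration of the whole empirical distribution within $\tau$ and then run a purely deterministic case analysis on that one good event. Your route costs slightly looser constants (you must take $2\tau$ below the relevant gaps, and perturb a ratio of frequencies rather than a linear statistic), but it buys a more unified argument and delivers structural facts the paper never proves: the exact monotone form of $R(\alpha)$ in $|\alpha|$, the equivalence of the paper's $(\Threshold,\Delta_R)$ margin with your $|\alpha|\le\tfrac12-\gamma$ parametrization, and the identification $\min(t_1^*,t_2^*)\bmod 2^m=\lfloor 2^m\delta\rfloor\bmod 2^m$ in the ambiguous case --- precisely the fact that links the flagged chunks to Lemma~\ref{lem:combined_approx}, which the paper's Theorem~\ref{thm:awqpe_accuracy_resolution_final} uses but omits ``for brevity.'' Your explicit treatment of the two delicate regimes (the degenerating gap $P(T_2\mid\delta)-P(j\mid\delta)$ as $\alpha\to 0$, and the transitional band where $R(\alpha)\approx\Threshold$) corresponds to, and is somewhat more careful than, the paper's introduction of the margin $\Delta_R$ and the discussion in Remark~\ref{rem:robust_ambiguity_detection}.
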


\begin{enumerate} [label=(\alph*), itemsep=0pt]
 \item \textbf{High Probability of Correctly Identifying the Most Probable Outcome}

 For any desired error probability $\epsilon_1 > 0$, the observed most frequent outcome $t_1^*$ is equal to the true most probable outcome $T_1$ with a probability of at least $1-\epsilon_1$. This holds provided that the number of measurements $\NumShots$ satisfies:
 $$ \NumShots \ge \frac{2}{\Delta P_{\min}^2} \ln\left(\frac{2^m- 2}{\epsilon_1}\right),$$
 where $\Delta P_{\min} = P(T_1|\delta) - \max_{j \notin \{T_1, T_2\}} P(j|\delta)$ is the minimum probability gap between the most probable outcome and any other non-adjacent outcome. Further, $\Delta P_{\min}$ is bounded below by a positive number.

 \begin{proof}
 The proof proceeds in two steps. First, we bound the probability of a single incorrect outcome having a higher count than $T_1$. Second, we use a union bound to generalize this to all possible incorrect outcomes.

 \textbf{Step 1: Bounding a single-pair error probability.}
 The observed most frequent outcome $t_1^*$ is incorrect if the count $N_{T_1}$ is not the highest count. This happens if there exists at least one other outcome $j \ne T_1$ for which $N_j \ge N_{T_1}$.
 
 Consider a specific incorrect outcome $j \ne T_1$. We want to bound the probability $\prob{N_j \ge N_{T_1}}$. Let $p_j = P(j|\delta)$ be the true probability of outcome $j$. We know $p_{T_1} > p_j$ for all $j \ne T_1$ in the non-ambiguous case.
 
 For each measurement $k \in \{1, \dots, \NumShots\}$, we define an independent and identically distributed (i.i.d.) random variable $Y_k$ using indicator functions:
 $$ Y_k = \indicator{\text{outcome } k \text{ is } T_1} - \indicator{\text{outcome } k \text{ is } j} .$$
 The random variable of interest is the difference in counts, $Z = N_{T_1} - N_j = \sum_{k=1}^{\NumShots} Y_k$. Each $Y_k$ takes values in $\{-1, 0, 1\}$ and has an expected value $E[Y_k] = p_{T_1} - p_j = \Delta_{T_1,j}$. The range of $Y_k$ is $1 - (-1) = 2$.
 
 We are interested in the probability of a large deviation, specifically $\prob{N_j \ge N_{T_1}} = \prob{Z \le 0}$. The expected value of $Z$ is $E[Z] = \NumShots E[Y_k] = \NumShots \Delta_{T_1,j}$. Applying Hoeffding’s inequality for the sum of bounded i.i.d. variables:
 \begin{align*}
 \prob{Z \le 0} &= \prob{Z - E[Z] \le -E[Z]} \\
 &\le \exp\left(-\frac{2(E[Z])^2}{\NumShots \cdot (1 - (-1))^2}\right) \\
 &= \exp\left(-\frac{2(\NumShots \Delta_{T_1,j})^2}{4\NumShots}\right) \\
 &= \exp\left(-\frac{\NumShots \Delta_{T_1,j}^2}{2}\right).
 \end{align*}
 Thus, for any specific outcome $j \ne T_1$, the probability of it being observed more frequently than $T_1$ is bounded by $\prob{N_j \ge N_{T_1}} \le \exp(-\NumShots \Delta_{T_1,j}^2/2)$.

 \textbf{Step 2: Applying a union bound.}
 The total error probability is the probability that at least one incorrect outcome has a higher count than $T_1$. We can bound this using a union bound over all possible incorrect outcomes:
 $$ \prob{t_1^* \ne T_1} = \prob{\exists j \ne T_1, N_j \ge N_{T_1}} \le \sum_{j \ne T_1} \prob{N_j \ge N_{T_1}} \le \sum_{j \ne T_1} \exp(-\NumShots \Delta_{T_1,j}^2/2). $$
 This sum is dominated by the term with the smallest probability gap $\Delta_{T_1,j}$. We make a key distinction here: the smallest gap occurs for the adjacent outcome $T_2$, but an error where $t_1^* = T_2$ is not an outright failure, as it falls under the ambiguity detection case. We, therefore, bound the probability of $t_1^*$ being any outcome other than $T_1$ or $T_2$.
 
 The probability gap for non-adjacent outcomes is bounded below by a constant. From the properties of the squared Dirichlet kernel it is known that:
 \begin{enumerate}[label=(\arabic*)]
  \item The probability of the true peak is lower-bounded by $P(T_1|\delta) = \frac{1}{2^{2m}}\cdot\frac{\sin^2(2^m \pi \theta )}{\sin^2(\pi \theta)} \ge  \frac{1}{2^{2m}}\cdot\frac{\sin^2(2^m \pi \theta )}{(\pi \theta)^2} \ge  \frac{4}{\pi^2} \approx 0.405$. Here $\theta = \delta-\frac{j}{2^m}$.
  \item For any non-adjacent outcome $j$ (i.e., $|2^\delta-j| \ge 1.5$), its probability is bounded above by $P(j|\delta) \le \frac{1}{\pi^2(1.5)^2} \approx 0.045$.
 \end{enumerate}
 This establishes a constant lower bound for the probability gap $\Delta P_{\min} = P(T_1|\delta) - \max_{j \notin \{T_1, T_2\}} P(j|\delta) \ge 4/\pi^2 - 1/(1.5\pi)^2 \approx 0.36$.
 
 Since there are at most $2^m-2$ such non-adjacent outcomes, we can bound the total error probability as follows.
 $$ \prob{t_1^* \ne T_1 \text{ and } t_1^* \ne T_2} \le (2^m-2) \exp(-\NumShots \Delta P_{\min}^2 / 2) .$$
 To ensure this probability is less than a desired error probability $\epsilon_1$, we set
 $$ (2^{m} - 2) \exp(-\NumShots \Delta P_{\min}^2 / 2) \le \epsilon_1 .$$
 Solving for $\NumShots$ yields the stated result
 $$ \NumShots \ge \frac{2}{\Delta P_{\min}^2} \ln\left(\frac{2^m - 2}{\epsilon_1}\right)  \approx 
 15.41 \ln\left(\frac{2^m - 2}{\epsilon_1}\right) .$$
 This concludes the proof. The dependence on $m$ is logarithmic, confirming that $\NumShots = O(m + \log(1/\epsilon_1))$.
  \end{proof}
 
 \item \textbf{Probabilistic Guarantees for Ambiguity Detection}

 Let $R_{true} = P(T_2|\delta)/P(T_1|\delta)$ be the true ratio of probabilities between the second and first most probable outcomes. We use a threshold $\Threshold \in (0, 1)$ to detect ambiguity.
 
 \begin{enumerate} [label=(\roman*), itemsep=0pt]
  \item \textbf{ Unambiguous Detection:}
  If the true phase is sufficiently far from an ambiguous point such that $R_{true} \le \Threshold - \Delta_R$ for some margin $\Delta_R > 0$, the algorithm will correctly identify the situation as \enquote*{not ambiguous} (i.e., $N_{t_2^*}/N_{t_1^*} \le \Threshold$) with high probability.
  
\begin{proof}
  We assume that $t_1^*=T_1$ and $t_2^*=T_2$, which holds with high probability as established in part 1. Our goal is to bound the probability of a false positive, i.e., $\prob{N_{T_2}/N_{T_1} > \Threshold}$.
  
  Let us consider the empirical frequencies (proportions) $\hat{p}_j = N_j/\NumShots$. We are interested in bounding $\prob{\hat{p}_{T_2}/\hat{p}_{T_1} > \Threshold}$, which is equivalent to $\prob{\hat{p}_{T_2} - \Threshold \hat{p}_{T_1} > 0}$. Let $p_1 = P(T_1|\delta)$ and $p_2 = P(T_2|\delta)$.
  
  Let us define a sequence of i.i.d. random variables for each measurement $k$:
  $$ X_k = \indicator{\text{outcome } k \text{ is } T_2} - \Threshold \cdot \indicator{\text{outcome } k \text{ is } T_1} .$$
  The sum of these variables, $Z' = \sum_{k=1}^{\NumShots} X_k$, is related to our empirical frequencies by $Z' = \NumShots(\hat{p}_{T_2} - \Threshold \hat{p}_{T_1})$.
  The expected value of $Z'$ is
  $$ E[Z'] = \sum_{k=1}^{\NumShots} E[X_k] = \NumShots(p_{T_2} - \Threshold p_{T_1}) .$$
  Given the condition $p_{T_2}/p_{T_1} \le \Threshold - \Delta_R$, we have $p_{T_2} \le p_{T_1}(\Threshold - \Delta_R)$. On substituting this into the expectation we obtain
  $$ E[Z'] \le \NumShots(p_{T_1}(\Threshold - \Delta_R) - \Threshold p_{T_1}) = -\NumShots p_{T_1} \Delta_R .$$
  Since $E[Z'] < 0$, the event $Z' > 0$ is a large deviation. We can apply Hoeffding's inequality. The random variable $X_k$ is bounded. Its maximum value is $1-\Threshold \cdot 0 = 1$ and its minimum value is $0-\Threshold \cdot 1 = -\Threshold$. Thus, $X_k \in [-\Threshold, 1]$, with a range of $1-(-\Threshold) = 1+\Threshold$.
  
  Applying Hoeffding's inequality gives
  \begin{align*}
  \prob{\hat{p}_{T_2} - \Threshold \hat{p}_{T_1} > 0} &= \prob{Z' > 0} \\
  &= \prob{Z' - E[Z'] > -E[Z']} \\
  &\le \exp\left(-\frac{2(E[Z'])^2}{\NumShots (1+\Threshold)^2}\right) \\
  &\le \exp\left(-\frac{2(\NumShots p_{T_1} \Delta_R)^2}{\NumShots (1+\Threshold)^2}\right) \\
  &= \exp\left(-\frac{2\NumShots p_{T_1}^2 \Delta_R^2}{(1+\Threshold)^2}\right).
  \end{align*}
  To ensure this probability is less than $\epsilon_2$, we set
  $$ \exp\left(-\frac{2\NumShots p_{T_1}^2 \Delta_R^2}{(1+\Threshold)^2}\right) \le \epsilon_2 .$$
  Form which it follows that
  $$ \NumShots \ge \frac{(1+\Threshold)^2}{2p_{T_1}^2 \Delta_R^2} \ln\left(\frac{1}{\epsilon_2}\right) .$$
  This provides a quantitative bound on the number of shots required for unambiguous detection.
     
\end{proof}

  \item \textbf{ Ambiguous Detection:}
  If the true phase is very close to an ambiguous point such that $R_{true} \ge \Threshold + \Delta_R$ for some margin $\Delta_R > 0$, the algorithm will correctly identify the situation as `ambiguous' (i.e., $N_{t_2^*}/N_{t_1^*} > \Threshold$) with high probability.

\begin{proof}
  This proof is symmetric to Case (i). We are interested in bounding the probability of a false negative: $\prob{N_{T_2}/N_{T_1} \le \Threshold}$, which is equivalent to $\prob{\hat{p}_{T_2} - \Threshold \hat{p}_{T_1} \le 0}$ or $Z' \le 0$.
  
  The condition is $p_{T_2}/p_{T_1} \ge \Threshold + \Delta_R$. The expected value of $Z'$ is now positive and we have
  $$ E[Z'] = \NumShots(p_{T_2} - \Threshold p_{T_1}) \ge \NumShots(p_{T_1}(\Threshold + \Delta_R) - \Threshold p_{T_1}) = \NumShots p_{T_1} \Delta_R .$$
  The probability of a false negative is $\prob{Z' \le 0} = \prob{Z' - E[Z'] \le -E[Z']}$.
  Applying Hoeffding's inequality with the same bounds for $X_k$ gives
  $$ \prob{Z' \le 0} \le \exp\left(-\frac{2(E[Z'])^2}{\NumShots (1+\Threshold)^2}\right) \le \exp\left(-\frac{2(\NumShots p_{T_1} \Delta_R)^2}{\NumShots (1+\Threshold)^2}\right) = \exp\left(-\frac{2\NumShots p_{T_1}^2 \Delta_R^2}{(1+\Threshold)^2}\right) .$$
  To ensure this probability is less than $\epsilon_2$, we obtain the same bound on $\NumShots$. The identification of $t_1^*$ and $t_2^*$ with $T_1$ and $T_2$ is guaranteed by part 1 of this lemma.
\end{proof}
 \end{enumerate}
\end{enumerate}

\begin{remark}
\label{rem:robust_ambiguity_detection}
\textbf{The Robustness of the Ambiguity Detection Mechanism}

The choice of the ambiguity threshold $\Threshold$ is a critical design parameter for the \AWQPEs \, algorithm. Our proof establishes a sufficient condition for the number of shots $\NumShots$ to reliably distinguish between two well-separated cases: when the true probability ratio $R_{true} = P(T_2|\delta)/P(T_1|\delta)$ is far from the threshold (i.e., $R_{true} \leq \Threshold - \Delta_R$ or $R_{true} \geq \Threshold + \Delta_R$), where $\Delta_R > 0$ is a theoretical margin.

However, the algorithm's practical performance is more robust than this proof requires, thanks to its integrated design:

\begin{enumerate} 
\item \textbf{High $\Threshold$ as a Safe Design Choice:} By choosing a high $\Threshold$ (e.g., $0.9$), we conservatively define ambiguity. This ensures that the ambiguity flag is only set for genuinely ambiguous phases where $R_{true}$ is very close to 1. This minimizes the risk of false positives, where a non-ambiguous phase is incorrectly flagged as ambiguous. It also allows the LSB-to-MSB post-processing routine to handle the majority of rounding corrections, leading to a cleaner separation of concerns.

    \item \textbf{Resilience to Low $\Threshold$ (Preemptive Correction):} As demonstrated by analysis of the algorithm's behavior, even if a non-ambiguous phase is incorrectly flagged as ambiguous due to a low $\Threshold$ (e.g., $0.5$), the algorithm is often resilient to this misclassification. In such cases, Algorithm 1 selects the raw chunk estimate using $\min(t_1^*, t_2^*)$, which effectively performs a ``round down" or ``borrow" correction preemptively. Since Algorithm 2 then disables further correction for that chunk (due to the ambiguity flag), this early correction still leads to the correct final phase estimate. This shows that the algorithm possesses a form of redundant or parallel correction logic, making it robust to a wide range of $\Threshold$ values.
\end{enumerate}

The choice of $\Threshold$ sets the algorithm's operational definition of ambiguity, while $\Delta_R$ is a parameter used in the proof to quantify the number of shots needed for a desired level of statistical confidence. The existence of a positive $\Delta_R$ ensures that the proof is valid, but the practical effectiveness of the algorithm stems from its well-designed handling of all possible scenarios, including those in the ``grey area" where the true probability ratio falls between $\Threshold - \Delta_R$ and $\Threshold + \Delta_R$. 
\end{remark}

\begin{remark}
The number of measurement shots required in \AWQPEs \, scales as
\[
\NumShots = O\left(\frac{1}{\Delta^2} \log\left(\frac{1}{\epsilon}\right)\right),
\]
where $\Delta$ refers either to the probability gap between outcomes (for correctness) or the margin $\Delta_R$ from the ambiguity threshold (for ambiguity detection). In particular, the number of shots needed to reliably detect whether $N_{T_2}/N_{T_1} \gtrless \Threshold$ scales as $O\left(\frac{1}{\Delta_R^2} \log\left(\frac{1}{\epsilon_2}\right)\right)$.

We use Hoeffding-style concentration bounds to derive these theoretical guarantees. These bounds are simple, variance-independent, and apply universally to bounded i.i.d. variables. Stronger alternatives, such as \textbf{Bernstein's inequality} (which incorporates variance information) or \textbf{Sanov-type large deviation bounds} (for multinomial distributions), can yield tighter constants in the exponential decay of error probabilities.

In particular, Bernstein bounds are useful when the dominant QPE outcome has high probability and low variance, as is typical in Dirichlet-like distributions. However, all such bounds preserve the same asymptotic scaling, so our theoretical complexity analysis remains valid. These refined bounds may offer practical benefits in implementations seeking to reduce measurement overhead.
\end{remark}

\begin{remark}
Sanov’s theorem provides asymptotically optimal estimates for the probability that the empirical distribution $\hat{P}_n$ deviates from the true QPE outcome distribution $P(j|\delta)$. Specifically, the probability that the most frequent empirical outcome differs from the true maximum $T_1$ satisfies:
\[
\mathbb{P}(\hat{P}_n \in \mathcal{A}) \le \exp\left(-\NumShots \cdot D_{\mathrm{KL}}^*\right),
\]
where $\mathcal{A}$ is the set of empirical distributions for which $\arg\max_j \hat{P}_n(j) \ne T_1$, and $D_{\mathrm{KL}}^*$ is the minimal KL divergence over this set.

While powerful in theory, Sanov bounds are nonconstructive in practice and do not yield explicit expressions for $\NumShots$ without computing $D_{\mathrm{KL}}^*$. For this reason, we rely on Hoeffding and Bernstein bounds for tractable and implementation-friendly shot complexity estimates.
\end{remark}

With the preceding lemmas establishing the necessary mathematical groundwork for handling binary approximations and their compositions, we are now equipped to present the comprehensive proof of correctness for the Adaptive Windowed Quantum Phase Estimation algorithm. This proof will demonstrate how the iterative quantum measurements, combined with the described classical post-processing and ambiguity resolution, reliably converge to the true phase $\phi$.

\subsection{Proof of \AWQPEs \, Ambiguity Resolution with Probabilistic Guarantees}

\begin{thm}[Accuracy and Ambiguity Resolution of AWQPE]
\label{thm:awqpe_accuracy_resolution_final}
Let $U$ be a unitary operator with an eigenstate $\ket{\psi}$ corresponding to an eigenvalue $e^{2\pi i \phi}$, where $\phi \in [0, 1)$ is the unknown phase. \AWQPE \, (\AWQPEs) algorithm, utilizing $\nTotal$ total control qubits distributed across $B$  blocks (e.g., $\mList$) produces an estimate $\tilde{\phi}$ of $\phi$.
For a sufficiently large number of control qubits $n$, the \AWQPE \, algorithm successfully determines the best $n$-bit approximation of $\phi$, i.e., $\lfloor 2^n \phi + 0.5 \rfloor / 2^n$ with high probability. 
\end{thm}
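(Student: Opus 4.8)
The plan is to combine the deterministic reconstruction identity of Lemma \ref{lem:combined_approx} with the probabilistic per-window guarantees of Lemma \ref{lem:probabilistic_outcome_quantitative}, working block by block from the most significant to the least significant. First I would fix the "ideal" picture: write $\phi$ in terms of its block-wise best approximations. Define $\delta_0 = \phi$ and, recursively, $\delta_j = \Frac{2^{m_j}\delta_{j-1}}$ for $j = 1,\dots,B$, so that $\delta_{j-1}$ is the true fractional phase seen by block $j$ and the block is tasked with producing the best $m_j$-bit approximation $\widehat{(\delta_{j-1})}_{m_j} = b^{(j)}/2^{m_j}$. A repeated application of Lemma \ref{lem:combined_approx} (telescoping over $j = B, B-1,\dots,1$) shows that the best $n$-bit approximation $\lfloor 2^n\phi + 0.5\rfloor/2^n$ is obtained by concatenating the $b^{(j)}$ and then applying, to each block $j < B$, a borrow of $1$ exactly when $\widehat{(\delta_j)}_{m_j+\cdots} > 0.5$, i.e.\ exactly when the MSB of the (corrected) block $j+1$ is $1$ and block $j+1$ was itself not a "round-up-from-borrow" artifact — which is precisely the LSB-to-MSB rule of Algorithm \ref{alg:wiqpe_ambiguity_resolution}, with the special-chunk index $\LastIdx$ handling the excluded $\delta_j = 0.5$ case. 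This establishes that \emph{if} every block returns its ideal chunk estimate (with the ambiguity flag set correctly whenever $\delta_{j-1}$ is within the threshold margin of a half-integer multiple of $2^{-m_j}$), then Algorithm \ref{alg:wiqpe_ambiguity_resolution} outputs exactly $\lfloor 2^n\phi+0.5\rfloor/2^n$.

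Next I would discharge the probabilistic part. For each block $j$, Lemma \ref{lem:probabilistic_outcome_quantitative}(a) gives that, with $\NumShots \ge \tfrac{2}{\Delta P_{\min}^2}\ln\!\big(\tfrac{2^{m_j}-2}{\epsilon_1}\big)$ shots, the observed $t_1^*$ equals $T_1$ — the true most probable outcome, which is $\widehat{(\delta_{j-1})}_{m_j}$ — with probability at least $1-\epsilon_1$, \emph{except} possibly confusing $T_1$ with its neighbour $T_2$. In the near-half-integer ("ambiguous") regime, parts (b)(i)–(ii) guarantee that the threshold test $N_{t_2^*}/N_{t_1^*} > \Threshold$ is decided correctly with probability $\ge 1-\epsilon_2$, and in that regime Algorithm \ref{alg:wiqpe_multi_qubit} deterministically selects $\min(t_1^*,t_2^*)\bmod 2^{m_j}$ as the chunk estimate; one checks this modular-minimum choice coincides with the value that makes the subsequent LSB-to-MSB borrow logic reconstruct the correct digits (this is the content of Remark \ref{rem:robust_ambiguity_detection}, item 2 — the preemptive "round-down" matches the suppressed correction). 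In the non-ambiguous regime the threshold test reports "not ambiguous" and $t_1^* = T_1$ suffices. Thus per block the failure probability is at most $\epsilon_1 + \epsilon_2$; a union bound over the $B$ blocks gives total failure probability at most $B(\epsilon_1+\epsilon_2)$, which we make $\le \epsilon$ by taking $\epsilon_1 = \epsilon_2 = \epsilon/(2B)$, at the cost of only a $\log B$ increase in $\NumShots$ per block.

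I would then assemble: conditioned on the good event (all blocks correct, all ambiguity flags correct), the reconstruction identity from the first paragraph shows $\tilde\phi = \lfloor 2^n\phi+0.5\rfloor/2^n$ exactly; the good event has probability $\ge 1-\epsilon$. The phrase "sufficiently large $n$" enters in two places: (i) the best-$n$-bit-approximation target is only meaningful once $n$ is large enough that $\delta_j \ne 0.5$ at every boundary (generic $\phi$), or else via the special-chunk / perturbation device of Remark \ref{remark:perturb}; and (ii) the $\Delta P_{\min} \ge 4/\pi^2 - 1/(1.5\pi)^2$ lower bound from Lemma \ref{lem:probabilistic_outcome_quantitative} is uniform in $m_j$, so the per-block shot count is $O(m_j + \log(B/\epsilon))$ and the total cost stays polynomial in $n$.

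The main obstacle — and the step I would spend the most care on — is the first paragraph: making the telescoped application of Lemma \ref{lem:combined_approx} fully rigorous when several consecutive boundaries are simultaneously near $0.5$, so that borrows could in principle cascade across more than one block. I expect to handle this by arguing that the "corrected MSB" that Algorithm \ref{alg:wiqpe_ambiguity_resolution} propagates is exactly the sign information Lemma \ref{lem:combined_approx} needs at each stage, and that the special-chunk bypass together with the ambiguity-flag bypass is precisely what prevents a spurious double-borrow; the generic-$\phi$ assumption (or Remark \ref{remark:perturb}) removes the genuinely degenerate cascades. A secondary, more routine obstacle is bookkeeping the $0$- versus $1$-indexing mismatch between $\AmbFlags$ and the chunks $\phi^{(j)}$ when verifying that the flag bypass fires on the correct block.
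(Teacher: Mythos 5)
Your proposal follows essentially the same route as the paper's proof: per-window probabilistic reliability of the raw chunks and ambiguity flags from Lemma~\ref{lem:probabilistic_outcome_quantitative}, combined with a backward (LSB-to-MSB) inductive reconstruction justified by Lemma~\ref{lem:combined_approx}. Your version is in fact more explicit than the paper's sketch (which omits full details), particularly in the recursive definition of the block-wise fractional phases, the union bound over the $B$ blocks, and the identification of the cascading-borrow subtlety as the point needing the most care.
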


\begin{proof}
It is known that the standard QPE with $m$ control qubits  produces measurement outcomes that results in the best $m$-bit approximation of $\phi$. 
Before showing the correctness of the LSB-to-MSB correction approach in \AWQPEs \, algorithm, we must establish that the raw binary chunks ($\phiRaw$)  are reliable. The AWQPE algorithm's core is the measurement of a window of $m_j$ control qubits in the $j$-th iteration. For each iteration, the true outcome probabilities follow a squared Dirichlet kernel distribution, with a true most probable outcome, $T_1$.

Lemma \ref{lem:probabilistic_outcome_quantitative} guarantees that, with a sufficient number of measurement shots ($\NumShots$), the empirically observed most frequent outcome $t_1^*$ will match the true most probable outcome $T_1$ with a high probability of at least $1-\epsilon_1$. This establishes that the raw chunk estimates, $\phi^{(j)}$, are reliable approximations of the true phase bits within their window, with a high degree of confidence.

Furthermore, Lemma \ref{lem:probabilistic_outcome_quantitative} also provides probabilistic guarantees for the ambiguity detection mechanism. If the true phase is close to an ambiguous point, the algorithm will correctly flag it as ambiguous with high probability. Conversely, if the true phase is non-ambiguous, it will be correctly identified as such. This confirms that the ambiguity flags ($\AmbFlags$) used in the post-processing step are also probabilistically reliable.

Based on these guarantees, we can now proceed with the induction argument on the post-processing routine, assuming the raw chunks $\phi^{(j)}$ and their corresponding ambiguity flags are correct with high probability.

The correctness of the LSB-to-MSB correction routine follows by an straightforward induction argument, working backward from the least significant chunk to the most significant. The base case (the final chunk) requires no correction as there are no further LSBs. For the inductive step, the correction rule, which applies a ``borrow" operation based on the MSB of the next less significant chunk, is directly justified by the principles established in Lemma \ref{lem:combined_approx}. This lemma precisely dictates how to combine best approximations of a value and its fractional part. The algorithm's specific handling of ambiguous points and ``special chunks" (e.g., fractional part of 0.5) also aligns with the conditions and considerations detailed within Lemma \ref{lem:combined_approx}, preventing incorrect carry/borrow propagation. Thus, by iteratively applying this logic, the algorithm correctly reconstructs the true phase estimate from the measured chunks. Full details are omitted for brevity.
\end{proof}

\section{Computational Complexity Analysis}
\label{sec:complexity}
This section provides a detailed comparative analysis of the computational complexity of the Adaptive Windowed Quantum Phase Estimation (AWQPE) algorithm and the Standard Quantum Phase Estimation (Standard QPE) algorithm. The analysis focuses on key resource requirements for implementation on noisy quantum hardware, including qubit count, circuit depth, gate count, and the number of measurement shots.

For this analysis, let $n$ be the number of bits in the phase estimate and $\NumTarg$ be the number of target qubits. We denote the gate complexity of a single unitary $U$ as $C_g(U)$ and its circuit depth as $C_d(U)$.

\subsection{Standard Quantum Phase Estimation (Standard QPE)}
The standard QPE algorithm estimates an $n$-bit phase by executing a single, deep quantum circuit. The algorithm requires $n$ auxiliary control qubits to store the phase estimate, in addition to the $\NumTarg$ qubits for the unitary operator $U$. The total qubit count is therefore $n + \NumTarg$. For high-precision estimates, this linear scaling with the desired precision is a significant and often prohibitive resource requirement for current quantum hardware.

The circuit depth is primarily determined by the sequence of controlled unitary operations. This sequence involves controlled-$U^{2^j}$ operations for $j=0, \dots, n-1$. A controlled-$U^{2^j}$ operation is typically synthesized by applying the controlled-$U$ operation $2^j$ times. Consequently, the depth of a single controlled-$U^{2^j}$ scales as $O(2^j \cdot C_d(U))$. The total depth of the controlled unitary sequence is dominated by the largest term, leading to a total circuit depth that scales as $O(2^n \cdot C_d(U))$. This exceptionally deep circuit architecture is highly susceptible to decoherence and error accumulation.

The total gate count for standard QPE has two primary components. The Inverse Quantum Fourier Transform (IQFT) on $n$ qubits requires a gate count of $O(n^2)$. The controlled unitary part of the circuit, however, consists of $n$ controlled-$U^{2^j}$ gates. The total number of applications of the unitary $U$ that must be synthesized into the circuit is $\sum_{j=0}^{n-1} 2^j = 2^n - 1$. This makes the controlled unitary part the dominant factor in the overall gate count, which scales as $O(2^n \cdot C_g(U))$. The total gate complexity of standard QPE is therefore dominated by this exponential term.

\subsection{Adaptive Windowed Quantum Phase Estimation (AWQPE)}
The AWQPE algorithm addresses the limitations of standard QPE by iteratively estimating the phase in smaller chunks, or ``windows," of size $m_i$. Let $B$ be the number of windows, such that the total precision is $n = \sum_{i=1}^B m_i$.

The number of control qubits needed at any point in time is only the size of the current window, $m_i$. By choosing small, fixed window sizes, the algorithm's control qubit requirement is reduced to $\max(m_i)$, which can be significantly smaller than $n$.

The maximum circuit depth required at any single point in the algorithm is determined by the largest window size, $\max(m_i)$. The depth for each iteration is dominated by the controlled-$U^{2^j}$ operations. For the $i$-th chunk, which estimates bits from $p_{\text{start}} = \sum_{j=1}^{i-1} m_j$ to $p_{\text{end}} = \sum_{j=1}^{i} m_j - 1$, the controlled unitary operations are of the form $U^{2^{p_{\text{start}}+p}}$ for $p=0, \dots, m_i-1$. The depth of the circuit for this chunk is therefore dominated by the final controlled unitary operation, which has a depth of $O(2^{p_{\text{start}}+m_i-1} \cdot C_d(U))$. This approach distributes the total computational effort across multiple shallow circuits (or independent blocks), making the algorithm highly resilient to decoherence compared to standard QPE.

The total gate count for AWQPE is the sum of the gate counts for each of the $B$ chunks. Each chunk requires an IQFT on $m_i$ qubits, with a gate count of $O(m_i^2)$. For the $i$-th chunk, the total number of applications of $U$ that must be synthesized into the circuit is $\sum_{p=0}^{m_i-1} 2^{p + \sum_{j=1}^{i-1} m_j} = 2^{\sum_{j=1}^{i-1} m_j} \cdot (2^{m_i}-1) $.
The total number of applications of $U$ across all chunks is the sum of these terms and it is equal to $ 2^n -1$. 
A crucial advantage of this iterative approach is that the individual blocks can be executed independently, significantly reducing the depth and resource requirements of any single execution. The overhead of the classical post-processing step is negligible in comparison to the quantum cost.

\subsection{Summary of Comparison}
Table \ref{tab:complexity_comparison} provides a comprehensive summary of the key trade-offs between the two algorithms. The AWQPE algorithm offers a more pragmatic approach for practical phase estimation on near-term quantum hardware by reducing the control qubit count, circuit depth, and the number of gates required for any single quantum circuit execution.

\begin{table}[h!]
\centering
\caption{Comparison of computational resources for standard QPE and a single circuit block of AWQPE.}
\label{tab:complexity_comparison}
\begin{tabular}{lcc}
\toprule
\textbf{Metric} & \textbf{Standard QPE} & \textbf{AWQPE ($i$-th Block)} \\
\midrule
Control Qubits & $n$ & $m_i$ \\
Circuit Depth & $O(2^n \cdot C_d(U))$ & $O(2^{\sum_{j=1}^{i} m_j - 1} \cdot C_d(U))$ \\
Total Applications of $U$ & $O(2^n)$ & $2^{\sum_{j=1}^{i-1} m_j} \cdot (2^{m_i}-1)$ \\
\bottomrule
\end{tabular}
\end{table}

In summary, the Adaptive Windowed Quantum Phase Estimation (AWQPE) algorithm presents a powerful approach for quantum phase estimation on near-term hardware. By decomposing the estimation of a high-precision phase into a series of smaller, iterative sub-problems, AWQPE mitigates the exponential resource demands of the standard QPE algorithm. This method effectively redistributes the computational burden from a single, resource-intensive circuit to multiple shallower circuits, significantly reducing the required control qubit count and maximum circuit depth for any single coherent operation. Consequently, AWQPE emerges as a viable and efficient method for platforms where large-scale, high-coherence circuits are not yet feasible, while still achieving the desired level of precision. The algorithm's flexibility in choosing window sizes, $m_j$, allows for further optimization tailored to the specific capabilities and limitations of the target quantum hardware.

\begin{remark}
Various QPE variants in the literature aim to reduce the unitary construction complexity from \( O(2^n) \). Our scheme is compatible with such approaches and stands to benefit from the resulting reduction in unitary gate complexity.
\end{remark}

\section{Advantages of Adaptive Windowed Quantum Phase Estimation}
\label{sec:advantages}

The Adaptive Windowed Quantum Phase Estimation (AWQPE) algorithm offers several significant advantages that position it as a highly promising approach for quantum phase estimation, particularly on contemporary and near-term quantum hardware. These benefits stem from its modular design and robust classical post-processing.

A primary advantage of AWQPE lies in its \textit{modular decomposition of phase estimation}. The overall phase estimation task is broken down into multiple independent segments, where each segment estimates a specific window of $m$ bits from the binary expansion of the phase $\phi$. Specifically, for a given block $b$, the algorithm aims to estimate $\phi_b := 2^{bm} \phi \bmod 1$. This localized estimation strategy enables a scalable and incremental recovery of the full phase, distributing the computational burden across smaller, more manageable units.

This modularity directly translates into \textit{shorter circuits per block}. Each AWQPE block operates on only $m$ qubits, rather than requiring a full $n$-qubit register for the entire precision. This reduction in circuit width and depth significantly lowers the demands on quantum memory and coherence time, making the approach exceptionally well-suited for Noisy Intermediate-Scale Quantum (NISQ) devices and other near-term quantum processors characterized by limited resources and high error rates.

Furthermore, AWQPE incorporates a robust mechanism for \textit{tolerance to ambiguity via Most Significant Bit (MSB) correction}. Due to inherent quantum measurement uncertainties, rounding errors, and modular aliasing, the outputs of individual blocks may suffer from ambiguity in their MSBs. To counteract this, a robust classical post-processing routine propagates corrections from right-to-left by comparing adjacent blocks. If a more significant block (e.g., block $b+1$) yields an unambiguous result, its MSB can be used to disambiguate the preceding block (block $b$). Crucially, ambiguity flags are employed to ensure that unreliable updates are gracefully avoided in cases where the measurement outcomes are highly uncertain, thereby enhancing the overall reliability of the phase estimate.

The probabilistic guarantees of AWQPE are rigorously \textit{informed by the Dirichlet kernel}. The probability distribution of each block's output is accurately characterized using the squared Dirichlet kernel, centered at $\phi_b$. This robust mathematical framework enables a rigorous modeling of outcome likelihoods, allowing for the derivation of analytical bounds on the number of measurement shots required to attain a given confidence level in the estimated phase.

The inherent design of AWQPE also facilitates a \textit{parallelizable and segmentable architecture}. Since individual modular QPE blocks operate independently, they can be executed concurrently across multiple quantum processors. This segmentable structure is particularly ideal for distributed quantum computing platforms, such as those based on photonic, ion-trap, or cryogenic qubit arrays, where independent execution and localized control offer significant operational benefits.

Moreover, the modularity confers \textit{resilience to local gate errors}. Errors occurring within a given block do not propagate globally throughout the entire computation. Each modular block is self-contained, meaning that faulty outputs from a specific block can be selectively discarded or remeasured without invalidating the results obtained from other blocks. This localized error containment enhances the algorithm's robustness to hardware noise and enables the application of statistical filtering or targeted repetition strategies to improve overall fidelity.

Finally, the modular structure of AWQPE is \textit{hardware-friendly for low-latency feedback}. It facilitates rapid measurement and feedback loops between quantum and classical components. Block-wise results can be validated, filtered, or fed into adaptive phase tuning procedures in a timely manner. This aligns naturally with hybrid quantum-classical workflows that increasingly rely on fast iterative updates and early stopping conditions to optimize resource utilization and achieve desired precision within the constraints of current quantum hardware.

The aforementioned theoretical advantages of the Adaptive Windowed Quantum Phase Estimation algorithm, including its modularity, resilience to errors, and efficient resource utilization, are critical for its applicability on near-term quantum hardware. To empirically validate these claims and demonstrate the practical performance of AWQPE, we now present the results from comprehensive numerical simulations. These simulations illustrate the algorithm's accuracy, efficiency, and robustness under various conditions, providing concrete evidence of its capabilities.

\section{Numerical Simulation Results}
\label{sec:results}

In this section we present numerical examples. The examples are based on a Python implementation using the IBM's \texttt{Qiskit} library \cite{Qiskit} and illustrate the algorithm's performance for various phase values and chunk sizes.

\subsection{Detailed Walkthrough: Test Case with Non-Uniform Chunk Sizes}

\begin{figure}[h!]
    \centering
    \begin{subfigure}[b]{0.9\textwidth} 
        \centering
        \includegraphics[width=0.6\textwidth]{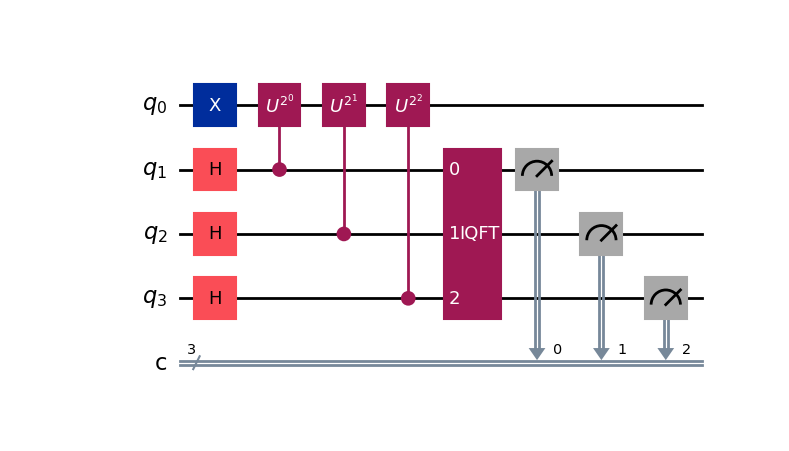} 
        \caption{Circuit for the first iteration block, estimating 3 MSBs.}
        \label{fig:circuit_iter1}
    \end{subfigure}
    \vspace{0.1cm} 

    \begin{subfigure}[b]{0.9\textwidth}
        \centering
        \includegraphics[width=0.6\textwidth]{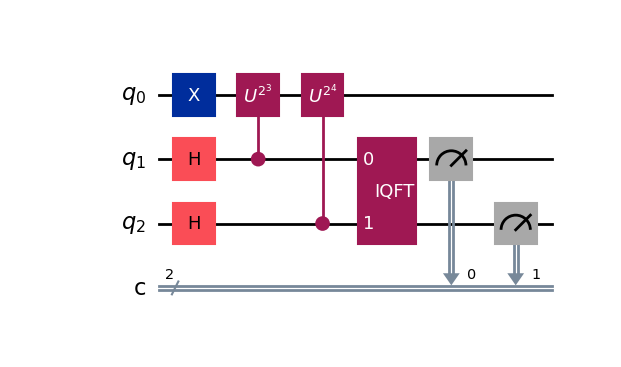} 
        \caption{Circuit for the second iteration block, estimating the next 2 bits.}
        \label{fig:circuit_iter2}
    \end{subfigure}
    \vspace{0.1cm} 

    \begin{subfigure}[b]{0.9\textwidth}
        \centering
        \includegraphics[width=0.6\textwidth]{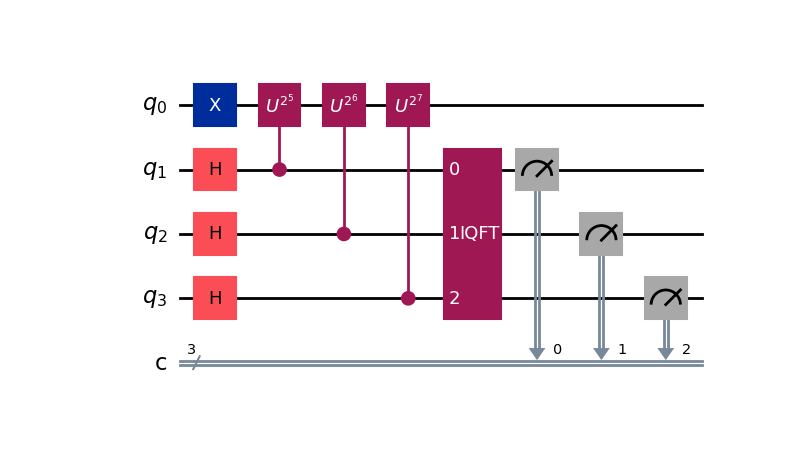} 
        \caption{Circuit for the third iteration block, estimating the final 3 bits.}
        \label{fig:circuit_iter3}
    \end{subfigure}

    \caption{Quantum circuit diagrams for the non-uniform chunk size test case ($\phi = 0.8203125$ with $\mList = [3, 2, 3]$). Each circuit corresponds to an iterative block of the \AWQPE \, algorithm.}
    \label{fig:all_circuits}
\end{figure}

\begin{figure}[h!]
    \centering
        \centering
        \includegraphics[width=0.9\textwidth]{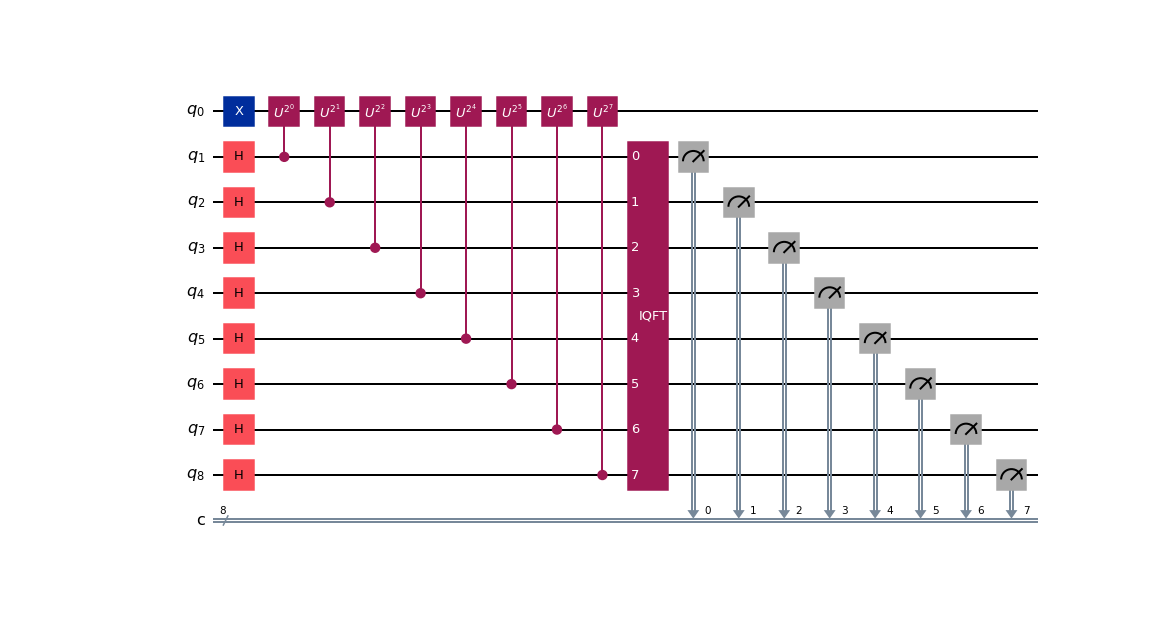} 
        \caption{Quantum circuit for the standard QPE for the  test case ($\phi = 0.8203125$ with $n = 8$ control qubits).}
        \label{fig:circuit_full}
    \end{figure}

This example demonstrates the \AWQPEs \, algorithm's ability to handle non-uniform chunk sizes and apply post-processing correction to the measured outcomes. The true phase is $\phi = 0.8203125$ with window sizes defined by $\mList = [3, 2, 3]$, resulting in a total of $\nTotal = 8$ bits to estimate. The true binary representation of $\phi$ is $0.11010010_2$.

The \AWQPEs \, algorithm proceeds in a series of iterative steps. In the first iteration, a QPE circuit is executed to estimate the first three most significant bits ($m=3$). The measurement outcomes from the simulation do not show a close tie between the top two most likely outcomes \enquote*{111} and \enquote*{110}. The ambiguity resolution mechanism of the algorithm  does not get triggered and  \enquote*{111} is selected as the most likely unambiguous outcome. The estimated binary string is therefore \enquote*{0.111}. The raw counts from the simulation were:
\begin{verbatim}
[('111', 5180), ('110', 3284), ('000', 534), ('101', 459), ('001', 237), ...].
\end{verbatim}

Next, the second QPE circuit is run to estimate the next two bits ($m=2$). The simulation yields an unambiguous measurement of `10', updating the estimated binary string to \enquote*{0.11110}. The raw counts were:
\begin{verbatim}
[('10', 8374), ('11', 1078), ('01', 467), ('00', 321)].
\end{verbatim}

For the final iteration, a QPE circuit estimates the last three bits ($m=3$), with an unambiguous measurement of \enquote*{010}. The concatenated raw binary estimate is therefore \enquote*{11110010}. The raw counts were:
\begin{verbatim}
[('010', 10240)].
\end{verbatim}

Following the measurements, a post-processing and correction procedure is applied from the least significant bit (LSB) to the most significant bit (MSB). For the second-to-last chunk (\enquote*{10}), the most significant bit (MSB) of the subsequent chunk (\enquote*{010}) is 0. As per the algorithm, no correction is applied, and the chunk value remains \enquote*{10} ($(2 - 0) \pmod{4} = 2$). For the first chunk (\enquote*{111}), the MSB of the next chunk (\enquote*{10}) is 1. Further, since an ambiguity flag was not set (and no special chunk was obtained) for the previous chunk, the subtraction bit is considered $1$, and so a bit-correction is applied. The chunk value changes to \enquote*{110} ($(7 - 1) \pmod{8} = 6$).

The final corrected binary estimate is \enquote*{0.11010010}, which corresponds to a decimal value of $0.8203125$. The algorithm successfully produced an accurate estimate of the true phase.

The distinct circuit configurations for each iterative block of the AWQPE algorithm for the above example are illustrated in Figure~\ref{fig:all_circuits}, with the quantum circuit for the first iteration (estimating the most significant bits) shown in Figure~\ref{fig:circuit_iter1}, the subsequent quantum circuit for the middle chunk in Figure~\ref{fig:circuit_iter2}, and the quantum circuit for the last chunk of bits in Figure~\ref{fig:circuit_iter3}. The quantum circuit for the standard QPE with $n=8$ control qubits is shown in Figure~\ref{fig:circuit_full}. It is clear that \AWQPEs \, algorithm produces considerably shallower quantum circuits in comparison to the standard QPE.

The performance of the \AWQPEs \, algorithm was rigorously evaluated across a wide range of test cases. Using the Qiskit simulation environment, thousands of diverse scenarios were examined, encompassing varied input phase values, non-uniform chunk sizes, and randomly selected eigenvalues/phases. In each instance, the estimated phase values obtained through the \AWQPEs \, algorithm demonstrated consistent accuracy. A selection of representative test cases is summarized in Table \ref{tab:summary}, clearly illustrating the algorithm's effectiveness. These results confirm that the \AWQPEs \, algorithm, including its post-processing correction mechanism, reliably provides precise phase estimates. In addition to Qiskit simulations, the algorithm's performance was further validated using a direct Dirichlet kernel based simulation, which was applied to over a million randomly generated test cases, yielding accurate results consistent with theoretical expectations.

\begin{table*}[htbp]
\centering
\caption{Summary of \AWQPEs \, Numerical Examples. }
\label{tab:summary}
\resizebox{\textwidth}{!}{
\begin{tabular}{ccccc}
\toprule
\textbf{Test Case} & \textbf{Input Parameters} ($\phi$, $\mList$) & \textbf{Raw Binary Estimate} & \textbf{Final Binary Estimate} & \textbf{Final Decimal Estimate} \\
\midrule
1 & $\phi=0.3$, $m=[2, 2]$ & \texttt{0101} & \texttt{0101} & 0.3125 \\
2 & $\phi=\pi/6$, $m=[3, 2, 2, 3]$ & \texttt{1000111000} & \texttt{1000011000} & 0.5234375 \\
3 & $\phi=0.671875$, $m=[4, 4]$ & \texttt{10111100} & \texttt{10101100} & 0.671875 \\
4 & $\phi=1/\sqrt{2}$, $m=[3, 3, 3, 3, 3, 3, 3, 3, 3, 3]$ & \texttt{110101010000010100110011010101} & \texttt{101101010000010011110011001101} & 0.7071067811921239 \\
5 & $\phi=\sin(\pi/12)$, $m=[5, 6, 7, 4]$ & \texttt{0100001001000010001110} & \texttt{0100001001000001111110} & 0.2588191032409668 \\
\bottomrule
\end{tabular}
}
\end{table*}

\section{Conclusion}
\label{sec:conclusion}

In this work, we have introduced the Adaptive Windowed Quantum Phase Estimation (AWQPE) algorithm, a modular and scalable approach that offers a compelling advantage over existing methods. AWQPE directly addresses the resource-intensive nature of the standard QPE algorithm, which is largely impractical for current Noisy Intermediate-Scale Quantum (NISQ) devices due to its deep circuits and high qubit count. By estimating multiple phase bits (of independent blocks) simultaneously in a ``window" of control qubits   AWQPE drastically reduces resources required to achieve a target precision. These independent blocks are amenable for parallelization, which allows for a shallower overall circuit and a significantly lower total execution time, making it a far more viable solution for near-term quantum hardware than the standard QPE approach.

A central feature of AWQPE is its robust classical post-processing logic, which not only resolves ambiguities in the phase estimation but also corrects for measurement errors. This mechanism enhances the reliability of the final estimate, a crucial factor in the presence of noise. 
Our numerical simulations confirm the algorithm's effectiveness and its superior performance in comparison to standard QPE, demonstrating the potential for realizable and efficient implementations on near-term quantum platforms.

\bibliographystyle{unsrt}

\end{document}